\newtheorem{theorem}{Theorem}
\newtheorem{corollary}{Corollary}
\newtheorem{lemma}{Lemma}
\theoremstyle{definition}
\newtheorem{definition}{Definition}
\DeclareMathOperator{\Pa}{Pa}
\DeclareMathOperator{\An}{An}
\title{Equivalence of Grandfather and Information Antinomy\\Under Intervention}
\author{\"Amin Baumeler
	\institute{Institute for Quantum Optics and Quantum Information (IQOQI), Boltzmanngasse 3, 1090 Vienna, Austria}
	\institute{Faculty of Physics, University of Vienna, Boltzmanngasse 5, 1090 Vienna, Austria}
	\institute{Facolt\`a indipendente di Gandria, Lunga scala, 6978 Gandria, Switzerland}
	\email{aemin.baumeler@oeaw.ac.at}
\and
	Eleftherios Tselentis
	\institute{Institute for Quantum Optics and Quantum Information (IQOQI), Boltzmanngasse 3 1090 Vienna, Austria}
	\email{eleftheriosermis.tselentis@oeaw.ac.at}
}
\begin{document}
\maketitle
\begin{abstract}
	\noindent
	Causal loops, {\it e.g.}, in time travel, come with two main problems.
	Most prominently, the grandfather antinomy describes the potentiality to inconsistencies: a problem of logical nature.
	The other problem is called information antinomy and is lesser known.
	Yet, it describes a variant of the former: There are not {\em too few\/} consistent solutions---namely none---but {\em too many}.
	At a first glance, the information antinomy does not seem as problematic as the grandfather antinomy, because there is no apparent logical contradiction.
	In this work we show that, however, both problems are equivalent under interventions: If parties can intervene in such a way that the information antinomy arises, then they can also intervene to generate a contradiction, and {\em vice versa}.
\end{abstract}

\section{Historical background, motivation, and result}
\label{sec:intro}
Causal loops are loops in cause-effect relations such that some event~$Q$ not only is an effect of another event~$P$---its cause---, but also is the cause of~$P$.
The discussion on causal loops entered the realm of physics more than a century ago.
Einstein~\cite{EinsteinCTC}, while developing the theory of general relativity, expressed his doubt that in general relativity time travel might be possible: Every world line in special relativity is not closed, the same cannot be said about general relativity.
After Einstein asked Carath\'eodory~\cite{EinsteinToCarathedory1,EinsteinToCarathedory2} to resolve this question, Lanczos~\cite{Lanczos:1924kn}, others (see, {\it e.g.}, Ref.~\cite{Stockum:1937}), and most notably G\"odel~\cite{Godel1949} found solutions to the equations of general relativity that describe causal loops.
In G\"odel's words:~``[I]f~$P,Q$ are any two points on a world line of matter, and~$P$ precedes~$Q$ on this line, there exists a time-like line connecting~$P$ and~$Q$ on which~$Q$ precedes~$P$; {\it i.e.,} it is theoretically possible in these worlds to travel into the past, or otherwise influence the past''~\cite{Godel1949}.
In the '90s, researchers around Thorne and Novikov started to investigate such causal loops and asked whether causal loops might lead to inconsistencies.
{\it E.g.}, is it possible that a time travelling billiard ball kicks its younger self off course in such a way that the younger self does {\em not\/} time travel?~\cite{Friedman1990,Echeverria1991}
Towards answering this question, Novikov formulated the self-consistency principle~\cite{Novikov1989,Friedman1990} which states that only self-consistent solutions to the dynamics on a causal loop occur, and that locally, physics is kept unchanged.
This means that the physical laws must be invariant under the absence or presence of a causal loops.\footnote{Note that the self-consistency principle without this addendum is trivial: In case of inconsistent dynamics, we simply change the description of the physical world, {\it e.g.}, by allowing for parallel universes, such that every inconsistent solution becomes a consistent one.}
At the same time, by approaching causal loops from a circuit-diagrammatic point of view, Deutsch~\cite{Deutsch1991} argued that with the help of quantum theory inconsistencies can be overcome.
A later quantum model for time travel based on circuits overcomes this issue as well, yet in a different way (see, {\it e.g.}, Refs.~\cite{Bennett,Pegg:2001wa,Svetlichny:2009ve,Svetlichny:2011gq,Lloyd2011,Allen2014}).

Within the last decade, causal loops became a topic of research again.
During the studies of higher-order quantum maps, {\it i.e.}, quantum maps of quantum maps, and among other frameworks~\cite{Chiribella2008,Pollock2018}, the process-matrix framework was developed~\cite{Oreshkov2012}.
A key feature of that latter framework is that it allows for correlations among distant parties that cannot be simulated causally.
Causal inequalities~\cite{Baumeler3parties,simplestcausalinequalities,Abbott2016,ClassicalNC1} limit the space of possible correlations where the parties cannot communicate through causal loops.
That framework, however, leads to violations of such Bell-like causal inequalities.
Moreover, the classical special case of that quantum framework violates causal inequalities as well~\cite{Baumeler2016}; the framework allows for causal loops~\cite{phd,barrett2020cyclic}.

Most of preceding work tries to exclude inconsistencies from causal loops.
This problem of inconsistencies is famously known as the {\em grandfather antinomy}.
The story to illustrate that problem is the following, where we divert from the usual homicide plot to a technicide plot.
Imagine a robot is programmed in such a way to travel to the past to encounter its younger self.
Once the robot meets its younger self, it disassembles it.
So, if the robot time travels, the robot does not time travel.
But now, since the robot does not time travel, it will time travel, {\it etc.}: a logical contradiction.
A simple instance of this problem is obtained with a NOT gate, where a bit is flipped and then looped back (see Fig.~\ref{fig:NOT}).
Note that throughout this article, a loop is {\em not\/} a feedback loop where a map is repeatedly applied, one application after the other, but of logical nature instead (the loop introduces constraints between the input and the output).
\begin{figure}[t]
	\centering
	\begin{subfigure}[t]{.4\textwidth}
		\centering
		\includegraphics[scale=0.3]{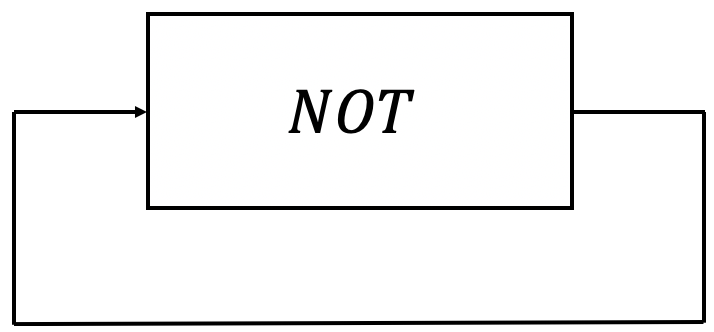}
		\caption{Grandfather antinomy: If, before the NOT gate, the bit takes value~$a$, then it takes value~$\neg a$ after. Then again,~$\neg a$ is looped back, which means that {\em before\/} the NOT gate, it takes value~$\neg a$: a logical contradiction.}
		\label{fig:NOT}
	\end{subfigure}
	\qquad
	\begin{subfigure}[t]{.4\textwidth}
		\centering
		\includegraphics[scale=0.3]{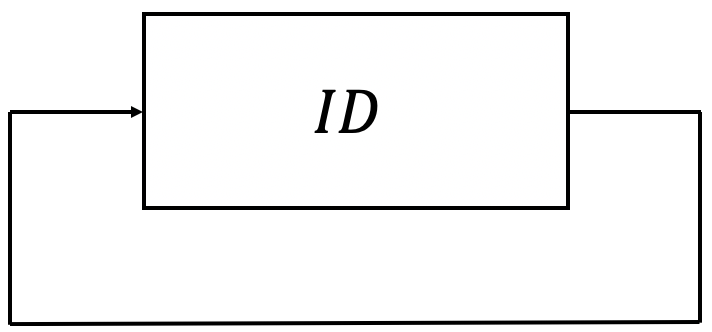}
		\caption{Information antinomy: If, before the identity gate, the bit takes value~$a$, then it takes value~$a$ after. Both values~$a=0$ as well as~$a=1$ are consistent solutions. Yet, what is the value of~$a$?}
		\label{fig:ID}
	\end{subfigure}
	\caption{Schematic representations of the grandfather and of the information antinomy.}
\end{figure}

Another problem, albeit lesser known, arises if too many consistent solutions occur.
This problem is known as the {\em information antinomy\/} and also carries different names\footnote{Already the multitude of names given to that problem suggests its inferiority when compared to the grandfather antinomy.} such as bootstrapping paradox, uniqueness ambiguity~\cite{Allen2014,Baumeler2018}, or ontological paradox~\cite{wuethrich}.
This problem is often illustrated with the following story.
Imagine a person wakes up one morning and finds, next to the bed, a book that contains a proof of a longstanding mathematical problem.
Later, this person travels to the past and places the book she or he found next to her or his bed.
If we analyse this story, we see that the book is given to that person by her/himself.
Yet, this story is problematic for two reasons.
Firstly, where does this proof come from?
We have complex information that arises out of nowhere.\footnote{Deutsch considers this problem as more severe compared to the grandfather antinomy and rejects it as {\em creationism\/}~\cite{Deutsch1991}.}
Secondly, why is the proof written in that way as it is written?
It could have been written in any other language, or, more drastically, the book might contain a proof for another longstanding mathematical problem.
Both problems we just discussed, however, are the same.
Given the boundary conditions multiple consistent outcomes might occur (see Fig.~\ref{fig:ID}).
A theory suffering from the information antinomy thus fails to provide predictions, even probabilistically~\cite{Allen2014}.

However, one might consider the information antinomy as less severe when compared to the grandfather antinomy.
The reason for such a judgement is often that the grandfather antinomy embodies a {\em logical\/} contradiction, while the information antinomy seems to be unproblematic from a logical point of view.
In this work we put light on this dichotomy and show that both problems are equivalent: They form two sides of the same coin.
This equivalence holds if the dynamics allow for intervention, {\it i.e.}, parties that are free to apply local transformations.
We show this equivalence in the deterministic setting without referring to probability theory or quantum theory.
The core in showing this equivalence is a theorem that is complementary to what is shown in Ref.~\cite{Baumeler2019}.
In Ref.~\cite{Baumeler2019} it is shown that, if for any choice of interventions parties can make, consistent solutions always exist, then the solution to the dynamics is {\em unique}.
Phrased differently, that result states that if no grandfather antinomy arises, then only a single consistent solution to the dynamics exists.
From this follows that the information antinomy does not arise either.
In this article we show that, if for any choice of interventions parties can make, no information antinomy arises ({\it i.e.}, not more than one consistent solution exists), then, just as in the previous case, the solution to the dynamics is {\em unique}.
From this, again, it follows that the absence of the information antinomy implies the absence of the grandfather antinomy.

Before we present the outline of this article, we reflect on how the information antinomy arises in previous articles.
In the mathematical experiments where billiard balls are thrown into a time machine, consistent dynamics were always found.
Yet, surprisingly, the authors discovered that ``dangerous'' boundary conditions lead to an infinity of consistent dynamics~\cite{Echeverria1991}.
In Deutsch's model~\cite{Deutsch1991}, then again, the information antinomy is {\em mitigated\/} by defining that the unique solution is the uniform mixture of all consistent solutions.
In contrast, the process-matrix framework~\cite{Oreshkov2012} seems not to suffer from this antinomy.
That is the case at least in the classical special case thereof~\cite{Baumeler2016}: The grandfather as well as the information antinomy never arises~\cite{Baumeler2016fixed}.

In the next section we describe causal models and provide the necessary definitions.
After that we show the above stated uniqueness results, from which the equivalence follows.
Finally, we conclude.

\section{Causal structure and interventions}
The reader interested on causal models is referred to the book by Pearl~\cite{Pearl} and to the recent articles, {\it e.g.,} to Refs.~\cite{Allen2016,barrett2019quantum,barrett2020cyclic}.
We need some notation before we can define the relevant mathematical objects.
Let~\mbox{$G=(V,E)$} be a directed graph, where~$V\subseteq\mathbb N$ denotes the set of vertices and where~$E$ is a relation describing the edges:~$(u,v)\in E$ if and only if there is an edge from~$u$ to~$v$.
For a vertex~$v$, the set~\mbox{$\Pa(v):=\{u\in V\mid (u,v)\in E\}$} is the set of all parents of~$v$.
A vertex~$u$ is called an ancestor of~$v$ if there exists a directed path from~$u$ to~$v$.
The set of ancestors of~$v$ is defined as~$\An(v)$.
Since we are dealing with deterministic dynamics, we define a causal structure as follows:
\begin{definition}[Causal structure]
	A {\em causal structure\/} is a tuple~$(G,X,\mu)$ where
		$G=(V,E)$ is a directed graph,
		$X$ is a family of sets~$\{X_v \mid |X_v| \geq 2\}_{v\in V}$, and 
		$\mu$ is a family of functions~$\{\mu_v: \bigtimes_{u\in\Pa(v)}X_u \rightarrow X_v\}_{v\in V}$.
\end{definition}
The parents of a vertex~$v$ are the {\em causes\/} of the {\em effect\/}~$v$.
The value every vertex takes is computed from a function of all its parents.
The condition that every vertex can take at least two values, {\it i.e.}, the condition~$|X_v| \geq 2$ for all~$v\in V$, is natural: A vertex~$v$ that can take only one value cannot be considered a cause nor an effect.
For a vertex~$v$ with zero in degree,~$\mu_v$ is a trivial function, {\it i.e.}, a constant.
Note that the sets associated to the vertexes do not have to be finite or discrete; they could, {\it e.g.}, contain all real numbers.

To allow for interventions, and similarly to the classical interventional model of Ref.~\cite{Allen2016} or the classical split nodes of Ref.~\cite{barrett2019quantum}, we augment a causal structure with parties.\footnote{We divert from the definitions in the mentioned articles in order to be more general; for split nodes the ``input'' and ``output'' of a party are elements from the same set. Any causal structure with split nodes can be transformed into a causal structure as we define it here.}
The main idea is that every party is a vertex~$i$ and can freely choose\footnote{By this local physics does not depend on whether the causal structure is cyclic or acyclic (see Section~\ref{sec:intro}). What we mean by ``freely choose'' is that every intervention is possible.} the function~$\mu_i$.
Furthermore, if we remove all incoming edges to the parties, then the graph has no directed cycles.\footnote{The reason for this condition is that otherwise interventions might have no influence on whether the grandfather or information antinomy arises. {\it E.g.}, imagine a causal structure with a detached loop free of any party such that the loop will always lead to an inconsistency. Our result can be read in the following way: If the parties {\em can\/} generate the grandfather antinomy, then they {\em can\/} also generate the information antinomy and vice versa.}
\begin{definition}[Causal structure with interventions and induced function]
	A {\em causal structure with interventions\/} is a tuple~$(G,X,\mu\setminus \{\mu_v\}_{v\in P},P)$ where
		$(G,X,\mu)$ is a causal structure,
		$P\subseteq V$ is a non-empty set of {\em parties}, and
		$(V,E\setminus \{(u,v)\in E\mid v\in P\})$ is a directed {\em acyclic\/} graph.
	For every party~$i\in P$ we define
		the {\em input space\/} as~$\mathcal{I}_i:=\bigtimes_{j\in\Pa(i)}X_j$, and the {\em output space\/} as~$\mathcal{O}_i:=X_i$,
		the {\em intervention\/} as a function~$f_i:\mathcal{I}_i\rightarrow\mathcal{O}_i$ of his or her choice, and
		the {\em induced function\/} as~$\omega_i:\bigtimes_{j\in P} \mathcal O_j\rightarrow \mathcal I_i$, as the function described by the causal structure.
		The {\em $|P|$-party induced function\/}~$\omega$ is defined as the list~\mbox{$\omega=(\omega_i)_{i\in P}$} and has signature~$\bigtimes_{i\in P}\mathcal O_i\rightarrow \bigtimes_{i\in P}\mathcal{I}_i$.
	Upon intervention~$\{f_i\}_{i\in P}$, {\em consistent assignment\/} of values to the vertexes exists if and only if there exists a family~$\{x_v\}_{v\in V}$ such that for every party~$i$ the value~$x_i$ equals~$f_i$ applied to the corresponding values, and for every vertex~$v\in V\setminus P$ the value~$x_v$ is equal to~$\mu_x$ applied to the corresponding values.
\end{definition}
An example of such a causal structure is shown in Fig.~\ref{fig:causalstructure}.
\begin{figure}[t]
	\centering
	\begin{subfigure}[t]{.4\textwidth}
		\centering
		\includegraphics[width=.8\textwidth]{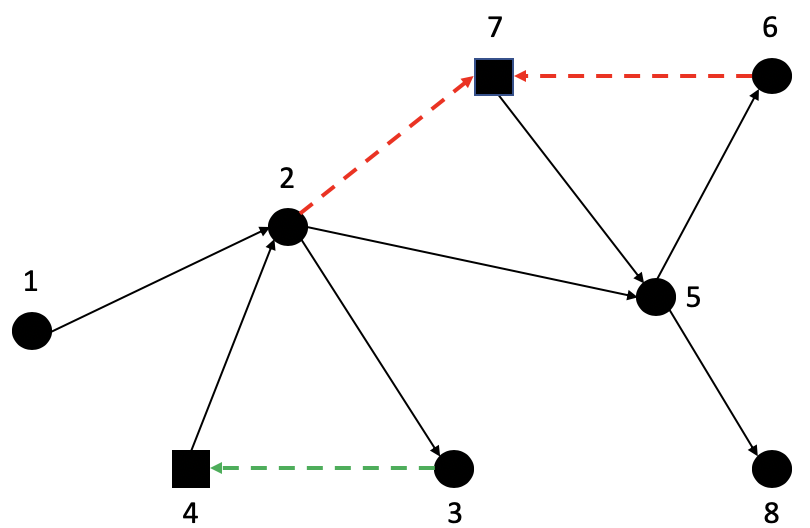}
		\caption{An example of a causal structure with interventions. Here, the vertexes~$4$ and~$7$ represent the parties. Thus, party~$4$ can intervene on the function along the edge from~$3$ to~$4$, and party~$7$ can intervene by specifying the function~$f_7:X_2\times X_6\rightarrow X_7$.}
		\label{fig:causalstructure}
	\end{subfigure}
	\qquad
	\begin{subfigure}[t]{.4\textwidth}
		\centering
		\includegraphics[width=\textwidth]{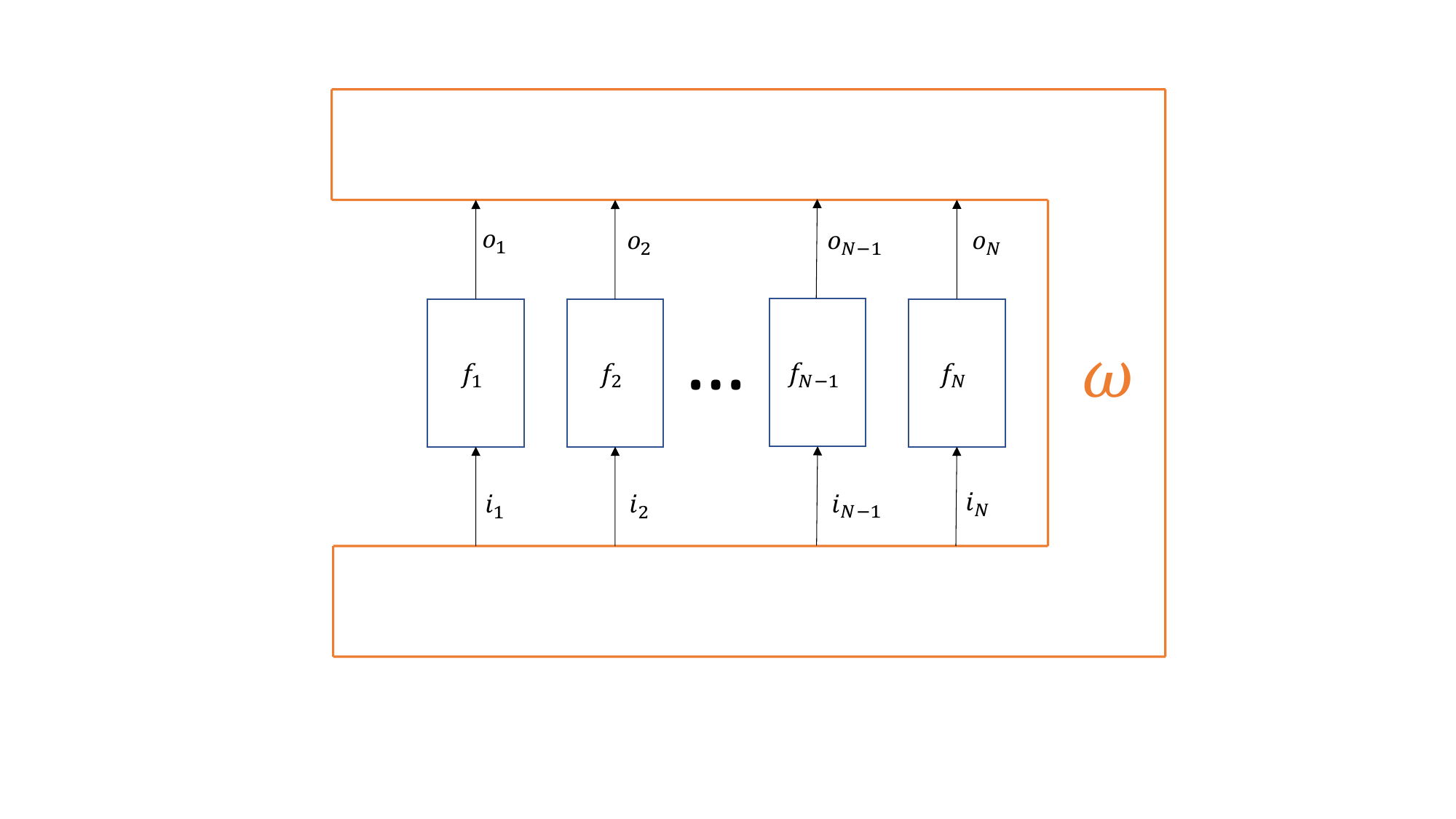}
		\caption{A pictorial depiction of the objects of the framework. Here, the parties' interventions~$f_i$ are drawn as boxes. The directed acyclic graph~$(V,E\setminus\{(u,v)\in E \mid v\in P\})$ is hidden in the box labelled by~$\omega$.}
		\label{fig: definitions}
	\end{subfigure}
	\caption{Causal structure and the induced function.}
\end{figure}
Note that the induced function for a party~$i$ depends only on the values of the vertexes in~$\An(i)\cap P$: For party~$i$, the domain of the induced function is~$\bigtimes_{j\in P} \mathcal O_j$, yet if some~$j\in P$ is not in the ``past'' of~$i$, then the value at~$j$ has no effect on that party's induced function.
Still, we define the domain of~$\omega_i$ as the Cartesian product of all sets~$X_j$ with~$j\in P$.
This is helpful because it allows us to treat all induced functions for all parties on an equal footing, and by that, to define the function~$\omega$.
Having this, and by using the labels~$1$ to~$n$ for the parties, we can move all the vertexes and edges from the graph into a box and draw a diagram as shown in Fig.~\ref{fig: definitions}.
Now, a consistent assignment of values to the vertexes exists if and only if the function~$\omega\circ (f_1,\dots,f_n)$ has a fixed point, {\it i.e.},
	$\exists i_1,\dots,i_n: (i_1,\dots,i_n) = \omega(f_1(i_1),\dots,f_n(i_n))$.

\section{The grandfather and the information antinomy}
To present the information and the grandfather antinomy in this framework, let us first fix some notation, following that of Ref.~\cite{Baumeler2019}.
We define the objects baring no index as the collection of objects, {\it e.g.}, the inputs~$i=(i_1,\cdots,i_n)\in\mathcal{I}=\mathcal{I}_1\times\cdots\times\mathcal{I}_n$, the outputs~$o=(o_1,\cdots,o_n)\in\mathcal{O}=\mathcal{O}_1\times\cdots\times\mathcal{O}_n$, and the interventions~$f=(f_1,\dots,f_n)$.
If we wish to remove a component~$k$, then we use the notation~\mbox{$\mathcal{I}_{\setminus k}=\mathcal{I}_1\times\mathcal{I}_2\times\cdots\mathcal{I}_{k-1}\times\mathcal{I}_{k+1}\times\cdots\times\mathcal{I}_n$} etc.
We will make abuse of notation for simplicity whenever we write~$\omega(o_{\setminus k},o_k)$ or similarly.
This expression reads as $\omega(o_1,o_2,\dots,o_{k-1},o_k,o_{k+1},\dots,o_n)$.
Also, we will make use of the expression~$\omega_k(o_{\setminus k},o_k)=\omega_k(o_{\setminus k})$,
which is a short-hand expression to denote the independence of~$\omega_k$ from the argument~$o_k$:
	$\forall a,b,\in O_k,o_{\setminus k}\in O_{\setminus k}: \omega_k(o_{\setminus k},a) = \omega_k(o_{\setminus k},b)$.

The grandfather antinomy arises if there exists a choice of interventions for the parties such that no consistent assignment of values to the vertexes exist (see Fig.~\ref{fig:grandfatherexample}).
\begin{figure}[h]
	\centering
	\begin{subfigure}[t]{.4\textwidth}
		\centering
		\includegraphics[width=0.5\linewidth]{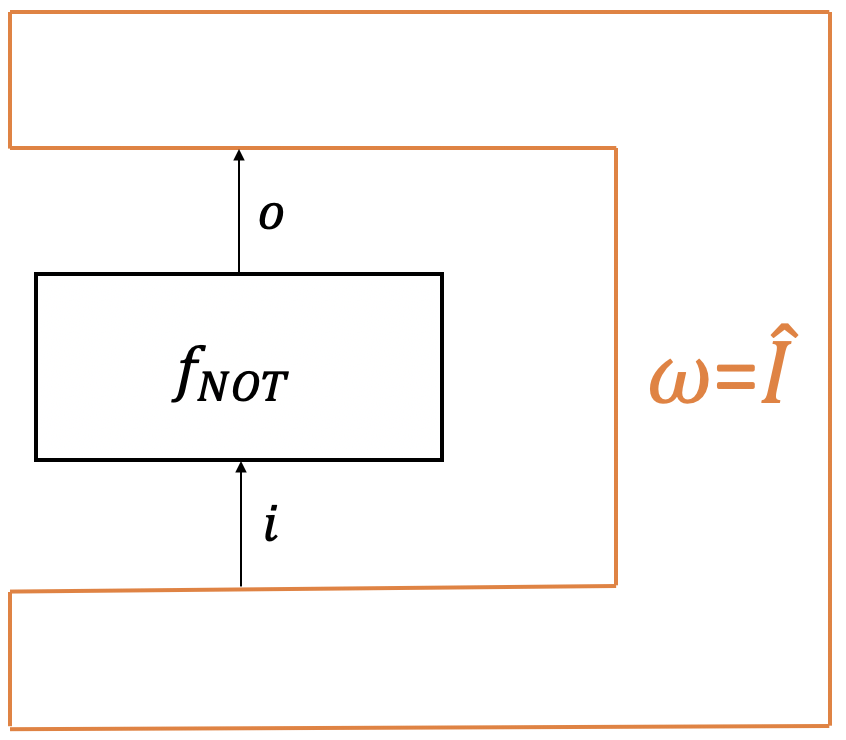}
		\caption{If the function $\omega$ is the identity and the operation of the party is to flip the inputs, then there is no fixed point (grandfather antinomy).}
		\label{fig:grandfatherexample}
	\end{subfigure}
	\qquad
	\begin{subfigure}[t]{.4\textwidth}
		\centering
		\includegraphics[width=0.5\linewidth]{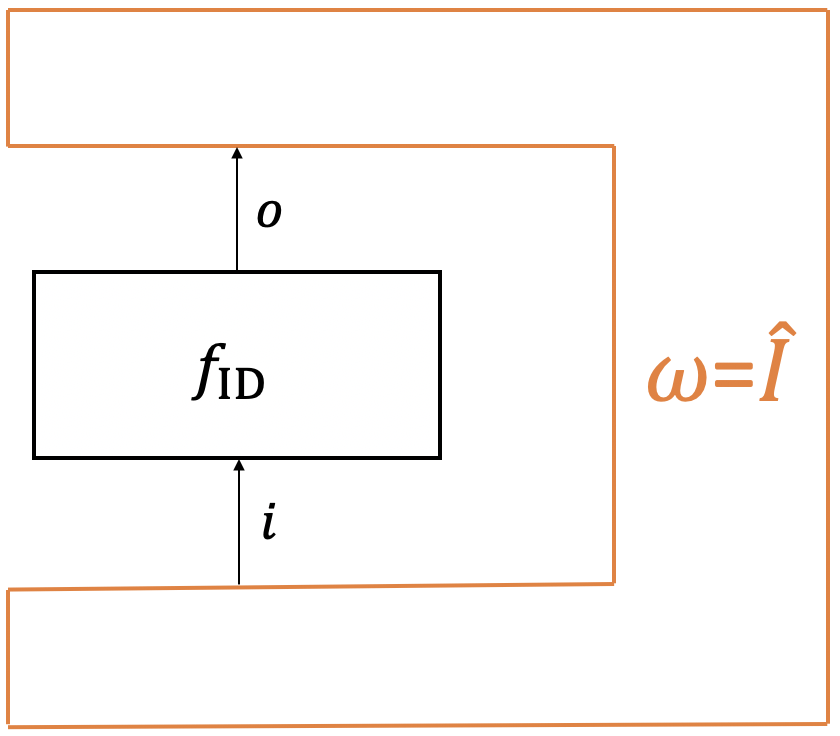}
		\caption{If both $f$ and $\omega$ are the identity channel,\\ then every possible input is a fixed point. If\\ the input variable $i$ is binary then there are two fixed points (information antinomy).}
		\label{fig:informationanti}
	\end{subfigure}
	\caption{Examples of antinomies}
	\label{fig:test}
\end{figure}
Formally, this is defined as follows.
\begin{definition}[Grandfather antinomy and process function]
	An~$n$-party induced function~$\omega$ suffers from the {\em grandfather antinomy\/} if and only if
		$\exists f: \left|\{i\mid i=\omega(f(i))\}\right|=0$.
	If an~$n$-party induced function~$\omega$ does not suffer from the grandfather antinomy, then we call~$\omega$ an~{\em $n$-party process function}.
\end{definition}
This means that in the cases where no grandfather antinomy arises, we are guaranteed to have at least one fixed point for every intervention.
The name {\em process function\/} is adequate because, as is seen at the end of this article, these functions form the set of classical and deterministic process matrices~\mbox{\cite{Oreshkov2012,Baumeler2016fixed}}.
The situation where the information antinomy arises is described by an induced function~$\omega$ where {\em multiple\/} consistent assignments of values to the vertexes exist (see Fig.~\ref{fig:informationanti}).
\begin{definition}[Information antinomy and pseudo process function]
	An~$n$-party induced function~$\omega$ suffers from the {\em information antinomy\/} if and only if
		$\exists f: \left|\{i\mid i=\omega(f(i))\}\right| \geq 2$.
	If an~$n$-party induced function~$\omega$ does not suffer from the information antinomy, then we call~$\omega$ an~{\em $n$-party pseudo process function}.
\end{definition}
When there is no information antinomy, that means that the function describing such scenarios has at most one fixed point for every choice of intervention.
It is shown in Ref.~\cite{Baumeler2019} that every process function always has a {\em unique\/} fixed point for every choice of intervention~$f$.
Here we extend that result to pseudo process functions:
\begin{theorem}\label{theorem:existenceoffixedpoint}
	Given an~$n$-party (pseudo) process function~$\omega$, there always exists a {\em unique\/} fixed point for every choice of intervention~$f=(f_1,...,f_n)$.
\end{theorem}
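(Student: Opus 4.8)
The plan is to prove both cases of the statement at once---$\omega$ a process function (always at least one fixed point) and $\omega$ a pseudo process function (always at most one)---by a single induction on the number of parties $n$, with the duality between the two antinomies entering at exactly one step. The anchor of the induction is the observation that if every party plays a constant intervention, then there is a unique fixed point: the party outputs are then all fixed, the graph with the edges into parties removed is acyclic, and propagating the functions from $\mu$ through this acyclic skeleton assigns a unique value to every non-party vertex, in particular to every input space $\mathcal{I}_i$; consistency at the parties is automatic because a constant output ignores its input. This fact uses neither hypothesis and serves as the base of the induction.

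For the step I would first record a reduction. Fixing one party $k$ to a constant output $c\in\mathcal{O}_k$ turns vertex $k$ into a constant source on the same acyclic skeleton, yielding a genuine $(n-1)$-party induced function $\omega^{(k\to c)}(o_{\setminus k})=\omega_{\setminus k}(o_{\setminus k},c)$ on the remaining parties. A tuple is a fixed point of $(f_{\setminus k},c)$ under $\omega$ if and only if its restriction is a fixed point of $\omega^{(k\to c)}$ under $f_{\setminus k}$, the $k$-th coordinate being forced by $i_k=\omega_k(f_{\setminus k}(i_{\setminus k}),c)$. Hence the two fixed-point sets are in bijection, so $\omega^{(k\to c)}$ inherits the relevant property of $\omega$ (process, resp.\ pseudo process) and, by the induction hypothesis, has a unique fixed point for every $f_{\setminus k}$. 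Specializing to $k=1$ and holding $f_{\setminus 1}$ fixed, this lets me define the response function $r\colon\mathcal{O}_1\to\mathcal{I}_1$ sending $c$ to the first coordinate of the unique fixed point of $(f_{\setminus 1},c)$.

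The crux is then a second bijection: for every $g_1\colon\mathcal{I}_1\to\mathcal{O}_1$, the fixed points of $(f_{\setminus 1},g_1)$ under $\omega$ correspond one-to-one with the fixed points of the self-map $g_1\circ r$ on $\mathcal{O}_1$. At a fixed point the first output equals $c:=g_1(i_1)$, which forces the whole tuple to coincide with the unique fixed point of the constant-$c$ intervention, whence $i_1=r(c)$ and $g_1(r(c))=c$; conversely every fixed point $c$ of $g_1\circ r$ lifts back to $\phi(1,c)$. Counting fixed points of the $n$-party system is thereby reduced to counting fixed points of a single-variable map on $\mathcal{O}_1$, and, decisively, $r$ itself does not depend on $g_1$.

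It then remains to force $r$ to be constant, and this is the one point where the antinomies enter dually. Suppose $r$ were non-constant and pick $c\neq c'$ with $r(c)=a\neq a'=r(c')$. In the pseudo process case I set $g_1(a)=c$ and $g_1(a')=c'$, making $g_1\circ r$ possess two fixed points, so the full system would too---the information antinomy---contradicting the hypothesis. In the process case I instead choose, for each $a$ in the image of $r$, a value $g_1(a)\notin r^{-1}(a)$ (possible since $r^{-1}(a)\subsetneq\mathcal{O}_1$ whenever $r$ is non-constant), making $g_1\circ r$ fixed-point-free, so the full system would have no fixed point---the grandfather antinomy---again a contradiction. Hence $r$ is constant; then $f_1\circ r$ is a constant self-map of $\mathcal{O}_1$, which has exactly one fixed point, so by the bijection $f=(f_{\setminus 1},f_1)$ has exactly one fixed point. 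As $f$ was arbitrary, the induction closes. I expect the main obstacle to lie in making the reduction and the fixed-point bijection fully rigorous inside the framework---verifying that collapsing a party to a constant yields a well-defined $(n-1)$-party induced function on the same acyclic skeleton and that the two fixed-point correspondences are exact; note that no finiteness of the $X_v$ is needed, since every auxiliary intervention above is specified pointwise.
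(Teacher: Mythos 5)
Your proposal is correct, and it reaches the theorem by a genuinely different route than the paper. The paper first proves a structural lemma---every component of a (pseudo) process function is constant over the same party's output (Lemma~\ref{constant1partyeqA.1}), which is needed even to \emph{define} the reduced function $\omega^{f_k}$ for an arbitrary intervention $f_k$---then shows that reduction preserves fixed points (Lemma~\ref{lemma:fppreservation}, Corollary~\ref{corollary:noormanyfp}) and the (pseudo) process property (Theorem~\ref{thm:transitivity}), and finally runs the induction contrapositively: a violation of uniqueness at $n+1$ parties is pushed down to a violation at $n$ parties, bottoming out in the single-party case (Corollary~\ref{generalonepartyconstant}). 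You instead induct upward and absorb a party only through a \emph{constant} intervention, for which $\omega^{(k\to c)}_\ell(o_{\setminus k})=\omega_\ell(o_{\setminus k},c)$ is well defined with no structural lemma at all; the no-self-signalling content is then recovered dynamically as constancy of your response function $r$, and the adversarial choices of $g_1$ that force this constancy (a fixed-point-free $g_1\circ r$ in the process case, two fixed points in the pseudo case) are the one-level-up analogue of the interventions in the paper's proof of Lemma~\ref{constant1partyeqA.1}. Your exact bijection between fixed points of $\omega\circ(g_1,f_{\setminus 1})$ and fixed points of the self-map $g_1\circ r$ on $\mathcal{O}_1$ is a sharper counting statement than the paper's one-directional preservation results (which, for instance, must choose $k$ away from a coordinate where two fixed points differ), and it localizes the use of the two hypotheses to a single dichotomy; the paper's factorization instead buys modular, physically interpretable lemmas (``a party cannot signal to herself'') that are reused elsewhere. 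Two formalities you already flag deserve a line each in a full write-up: either verify that demoting party $k$ to a constant source (with its in-edges dropped) again yields a bona fide $(n-1)$-party causal structure with interventions, or---simpler, and matching the paper's own level of abstraction---state the induction for all functions $\omega:\mathcal{O}\rightarrow\mathcal{I}$ with the at-least-one/at-most-one fixed-point property, which is all your argument uses; and note that your ``anchor'' is really the $n=1$ instance of your own induction step, where the trivial fact that a constant map $\omega\circ f$ has exactly one fixed point plays the role of the induction hypothesis.
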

We prove this theorem in the next section.
A corollary of that theorem is our main message:
\begin{corollary}
	\label{corollary:equiv}
	The grandfather antinomy is equivalent to the information antinomy under intervention.
	More precisely, let~$\omega$ be an induced function, then
		$\omega$ suffers from the grandfather antinomy
		if and only if~$\omega$ suffers from the information antinomy.
\end{corollary}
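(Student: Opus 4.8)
The plan is to obtain Corollary~\ref{corollary:equiv} as an immediate consequence of Theorem~\ref{theorem:existenceoffixedpoint}, which carries all the mathematical weight. The essential point is that the theorem, read with its parenthetical \emph{(pseudo)}, packages two separate guarantees: if $\omega$ is a process function (no grandfather antinomy), then every intervention has a unique fixed point, and if $\omega$ is a pseudo process function (no information antinomy), then likewise every intervention has a unique fixed point.

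First I would reformulate the biconditional by contraposition. By definition, \emph{suffering from the grandfather antinomy} is exactly the negation of \emph{being a process function}, and \emph{suffering from the information antinomy} is exactly the negation of \emph{being a pseudo process function}. Hence it suffices to establish the equivalence that $\omega$ is a process function if and only if $\omega$ is a pseudo process function; negating both sides of this equivalence then yields the corollary verbatim.

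For the forward implication, I would assume $\omega$ is a process function. By Theorem~\ref{theorem:existenceoffixedpoint} every intervention $f$ admits a \emph{unique} fixed point, so the set $\{i \mid i = \omega(f(i))\}$ has cardinality exactly one for every $f$ and in particular never attains cardinality $\geq 2$. Thus the information antinomy never arises, and $\omega$ is a pseudo process function. For the reverse implication I would assume $\omega$ is a pseudo process function; again Theorem~\ref{theorem:existenceoffixedpoint} supplies a unique fixed point for every $f$, so the fixed-point set is never empty and the grandfather antinomy never arises, whence $\omega$ is a process function.

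Given the theorem, the corollary presents no genuine obstacle: the only care required is to invoke the correct clause of the theorem on the correct hypothesis and to track the double negation when passing back from the process/pseudo-process formulation to the antinomy formulation. All of the real difficulty is concentrated in Theorem~\ref{theorem:existenceoffixedpoint} itself---showing that the mere absence of a \emph{second} fixed point for every intervention is already enough to force the existence of a \emph{first} one---so my effort would be spent there rather than on the corollary.
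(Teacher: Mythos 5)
Your proposal is correct and follows essentially the same route as the paper: both directions are obtained by contraposition from Theorem~\ref{theorem:existenceoffixedpoint}, using the process-function clause to rule out the information antinomy and the pseudo-process-function clause to rule out the grandfather antinomy. You also correctly identify that all the substance lives in the theorem, which is exactly how the paper structures the argument.
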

This means that, if there exists an intervention~$f$ such that we can produce the grandfather antinomy in~$\omega$, then there also exists some intervention~$f'$ to produce the information antinomy, and {\em vice versa}.

\section{Properties of (pseudo) process functions}
We derive the properties of pseudo process functions and restate some results from Ref.~\cite{Baumeler2019} that hold for both, process as well as pseudo process functions.
Recently, some process functions have been characterized~\cite{Tobar2020}.
\begin{lemma}
	\label{constant1partyeqA.1}
	For an~$n$-party (pseudo) process function~$\omega$, each component~$\omega_k:\mathcal O\rightarrow \mathcal I_k$ must be constant over $\mathcal{O}_k$, {\it i.e.},
		$\forall a,b\in\mathcal{O}_k,o_{\setminus k}\in\mathcal{O}_{\setminus k}: \omega_k(o_{\setminus k},)=\omega_k(o_{\setminus k},b)$.
\end{lemma}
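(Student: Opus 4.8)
The plan is to argue by contraposition: I assume that some component $\omega_k$ genuinely depends on its own output argument $o_k$, and then exhibit explicit interventions showing that $\omega$ suffers from \emph{both} antinomies at once, contradicting the hypothesis that $\omega$ is a process function (no grandfather antinomy) or a pseudo process function (no information antinomy). Concretely, negating the claim produces values $o_{\setminus k}\in\mathcal O_{\setminus k}$ and $a,b\in\mathcal O_k$ such that $p:=\omega_k(o_{\setminus k},a)$ and $q:=\omega_k(o_{\setminus k},b)$ are \emph{distinct} elements of $\mathcal I_k$. These are the only features of $\omega$ I will need; note that $\omega$ itself is fixed, and only the interventions $f$ are varied.

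The key reduction is to collapse the $n$-party problem to a single self-loop at party $k$. I let every other party $j\neq k$ apply the constant intervention $f_j\equiv o_j$, the $j$-th component of the fixed tuple $o_{\setminus k}$. Since a constant $f_j$ forces the output $o_j$ regardless of its input, the output tuple fed into $\omega$ is always $(o_{\setminus k},f_k(i_k))$, and so the fixed-point equation $i=\omega(f(i))$ reduces, in its $k$-th coordinate, to $i_k=\omega_k(o_{\setminus k},f_k(i_k))=:g(i_k)$, while the remaining coordinates are then determined by $i_j=\omega_j(o_{\setminus k},f_k(i_k))$. I would check that this yields a bijection between fixed points of $\omega\circ f$ and fixed points of the self-map $g:\mathcal I_k\to\mathcal I_k$, so that counting fixed points of the full system amounts to counting fixed points of $g$; this bookkeeping is the one step that needs care, since I must confirm that distinct solutions $i_k$ give distinct global fixed points and conversely.

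With the reduction in hand, the two constructions are immediate. To witness the information antinomy I choose $f_k(p)=a$ and $f_k(q)=b$ (and arbitrarily elsewhere), so that $g(p)=\omega_k(o_{\setminus k},a)=p$ and $g(q)=\omega_k(o_{\setminus k},b)=q$; since $p\neq q$, the map $g$ has at least two fixed points. To witness the grandfather antinomy I instead choose $f_k(p)=b$ and $f_k(i_k)=a$ for every $i_k\neq p$, so that $g(p)=q\neq p$ while $g(i_k)=p\neq i_k$ for all $i_k\neq p$; then $g$ has no fixed point at all. Here I rely only on $p\neq q$ and on $p,q$ lying in $\mathcal I_k$, both guaranteed by the assumed dependence.

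Combining the pieces finishes the argument: a $\omega_k$ that depends on $o_k$ forces $\omega$ to admit one intervention with no fixed point and another with two fixed points, contradicting both the process- and the pseudo-process-function hypotheses. Hence, for any $n$-party (pseudo) process function, each $\omega_k$ is independent of $o_k$, i.e.\ constant over $\mathcal O_k$. I expect the only genuine obstacle to be the fixed-point bookkeeping in the reduction; the antinomy-producing interventions are then purely combinatorial.
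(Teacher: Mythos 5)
Your proposal is correct and follows essentially the same route as the paper: fix all other parties to constant interventions $f_j\equiv o_j$, and use the assumed dependence of $\omega_k$ on $\mathcal O_k$ to build one intervention for party $k$ with no fixed point (grandfather) and another with two fixed points (information), contradicting either hypothesis. Your explicit bijection between fixed points of $\omega\circ f$ and fixed points of the self-map $g:\mathcal I_k\to\mathcal I_k$ is just a cleaner packaging of the bookkeeping the paper does directly on the full tuples, and both of your interventions coincide with the paper's up to renaming.
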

\begin{proof}
	Let~$\omega$ be a (pseudo) process function, and let~$\tilde o_{\setminus k}\in\mathcal O_{\setminus k}$ be some fixed input to~$\omega$ for all parties except for party~$k$.
	This allows us to define the function~$h:\mathcal O_k\rightarrow\mathcal I_k$ as~$h: x \mapsto \omega_k(x,\tilde o_{\setminus k})$.
	Showing that~$\omega$ is constant now boils down in showing that~$h$ is constant for all~$\tilde o_{\setminus k}$.
	Assume towards a contradiction that, for the given~$\tilde o_{\setminus k}$,~$h$ is not a constant, {\it i.e.}, there exist two values~$x\not=y$ such that
		$a:=h(x)\not=h(y)=:b$.
	Now, we design the interventions such that, in the case where~$\omega$ is a process function,~$\omega \circ f$ has no fixed point, and in the other case,~$\omega \circ f$ has at least two fixed points.
	In both cases, the intervention of every party~$\ell\not=k$ is~$f_\ell(z) := \tilde o_\ell$.
	This intervention is a constant function and generates the input~$\tilde o_{\setminus k}$.
	For party~$k$, in the former case ($\omega$ is assumed to be a {\em process function}), the intervention is
	\begin{align}
		f_k: z \mapsto
		\begin{cases}
			y & \text{if }z=a\\
			x & \text{otherwise.}
		\end{cases}
	\end{align}
	Indeed, in this case,~$\omega \circ f$ has {\em no\/} fixed point:
		$\forall i_{\setminus k}: \omega\circ f(a,i_{\setminus k}) = \omega(f_k(a),\tilde o_{\setminus k}) 
		= \omega(y,\tilde o_{\setminus k}) = (b,\tilde i_{\setminus k})$,
	and
		$\forall i_{\setminus k},z\not=a: \omega\circ f(z,i_{\setminus k}) = \omega(f_k(z),\tilde o_{\setminus k})
		= \omega(x,\tilde o_{\setminus k}) = (a,\tilde i_{\setminus k})$,
	where we do not need to further specify~$\tilde i_{\setminus k}$; that~$\omega \circ f$ has no fixed point is evident by looking at the~$k$-th component only.

	In the latter case ($\omega$ is assumed to be a {\em pseudo process function}), the intervention of party~$k$ is
	\begin{align}
		f_k: z \mapsto
		\begin{cases}
			x & \text{if }z=a\\
			y & \text{otherwise.}
		\end{cases}
	\end{align}
	Now, we define
	\begin{align}
		\alpha_\ell := \omega_\ell\left( f_k(a),\tilde o_{\setminus k} \right)\,,
		\quad\beta_\ell  := \omega_\ell\left( f_k(b),\tilde o_{\setminus k} \right)
		\,,
	\end{align}
	and we observe that~$(a,\alpha_{\setminus k})$ as well as~$(b,\beta_{\setminus k})$ are {\em two distinct\/} fixed points of~$\omega\circ f$:
	\begin{align}
		(a,\alpha_{\setminus k}) = \omega\left(f_k(a),f_{\setminus k}(\alpha_{\setminus k})\right)\,,
		\quad(b,\beta_{\setminus k}) = \omega\left(f_k(b),f_{\setminus k}(\beta_{\setminus k})\right)
		\,.
	\end{align}
	This holds for every~$k$ and for every~$\tilde o_{\setminus k}$.
\end{proof}

This result, being true for both process and pseudo process functions, has a clear physical interpretation: A party cannot signal back to her or himself.
Furthermore, it has an implication on single-party (pseudo) process functions:
\begin{corollary}\label{generalonepartyconstant}
	A single-party function~$\omega:\mathcal{O}\rightarrow\mathcal{I}$ is a (pseudo)
	process function 
	if and only if~$\omega$ is a constant.
\end{corollary}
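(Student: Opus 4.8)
The plan is to obtain both directions of the biconditional almost immediately, deriving the forward implication from Lemma~\ref{constant1partyeqA.1} and the converse from a direct fixed-point count.

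For the forward direction, I would suppose that $\omega$ is a single-party (pseudo) process function. Since there is exactly one party, the index $k$ appearing in Lemma~\ref{constant1partyeqA.1} can only be $k=1$, and the complementary space $\mathcal{O}_{\setminus 1}$ is the trivial (one-point) empty Cartesian product. Thus the conditional statement of the lemma---that $\omega_k$ does not depend on its $k$-th output argument once the remaining outputs are held fixed---degenerates into the unconditional statement that $\omega=\omega_1$ takes the same value on all of $\mathcal{O}=\mathcal{O}_1$. Hence $\omega$ is constant.

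For the converse, I would assume $\omega$ is constant, say $\omega(o)=c$ for all $o\in\mathcal{O}$ with some fixed $c\in\mathcal{I}$, and verify directly that for every intervention $f:\mathcal{I}\rightarrow\mathcal{O}$ the composite $\omega\circ f$ has exactly one fixed point. Indeed, since $\omega$ ignores its argument, $\omega(f(i))=c$ for every $i\in\mathcal{I}$, so the fixed-point equation $i=\omega(f(i))$ reduces to $i=c$, which admits the single solution $i=c$ regardless of the choice of $f$. Therefore $\omega$ never produces zero fixed points (so no grandfather antinomy) and never produces two or more fixed points (so no information antinomy), which means $\omega$ is simultaneously a process function and a pseudo process function, covering both readings of the ``(pseudo)'' qualifier.

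The content here is light enough that there is no genuine obstacle; the only point demanding a moment of care is the bookkeeping in the forward direction, namely recognising that in the single-party setting the quantifier over $o_{\setminus k}$ in Lemma~\ref{constant1partyeqA.1} is vacuous and that $\mathcal{O}_{\setminus 1}$ is the empty-product space, so that ``constant over $\mathcal{O}_k$ for each fixed $o_{\setminus k}$'' and ``constant on $\mathcal{O}$'' coincide.
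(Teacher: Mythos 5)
Your proof is correct and follows essentially the same route as the paper's: the forward direction is read off from Lemma~\ref{constant1partyeqA.1} specialised to a single party, and the converse is the direct observation that a constant map $\omega\equiv c$ gives $\omega\circ f$ the unique fixed point $i=c$ for every intervention $f$. Your version merely spells out the bookkeeping (the vacuity of the quantifier over $\mathcal{O}_{\setminus 1}$ and the explicit fixed-point count) that the paper leaves implicit.
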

\begin{proof}
	The ``only if'' case is a direct consequence of the previous lemma.
	For the ``if'' case, let~$\omega$ be a constant.
	Then it clearly has a unique fixed point from which it follows that~$\omega$ is a single-party (pseudo) process function.
\end{proof}
Note that this is a direct implication of the definitions of process function (at least one fixed point) and pseudo process function (at most one fixed point) combined with the requirement of consistency with arbitrary interventions.
If we were interested, for example, in at least or at most two fixed points, Corollary~\ref{generalonepartyconstant} would not be necessarily true.

In further discussions, we will make use of reduced functions (see Fig.~\ref{fig: reduced}):
\begin{figure}[h]
	\centering
	\includegraphics[width=.6\textwidth]{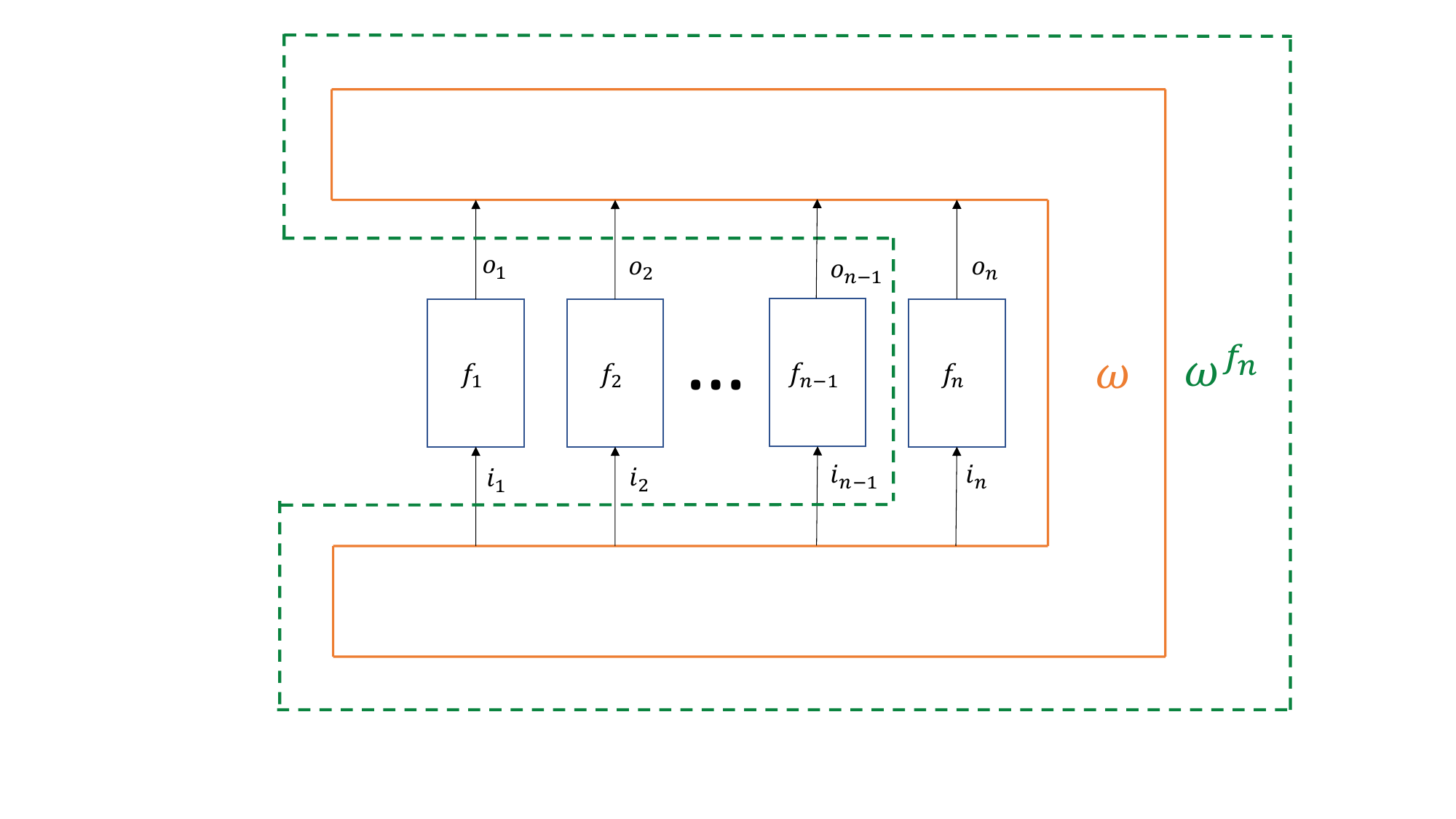}
	\caption{The reduced function $\omega^{f_n}$ is obtained by plugging in the intervention~$f_n$ for party~$n$.}
	\label{fig: reduced}
\end{figure}
\begin{definition}[Reduced function]
	Consider an~$n$-party function~$\omega:\mathcal{O}\rightarrow\mathcal{I}$, with~$O=O_1\times\dots\times O_n$ and~$I=I_1\times\dots\times I_n$, such that for every party~$k$,~$\omega_k(o)=\omega_k(o_{\setminus k})$, {\it i.e.},~$\omega_k$ is constant over~$\mathcal O_k$.
	For a fixed intervention~$f_k:\mathcal{I}_k\rightarrow\mathcal{O}_k$ of party~$k$, we define the {\em reduced function\/}~$\omega^{f_k}$ where the~$k$-th party has been ``swallowed,'' as~$\omega^{f_k}:\mathcal{O}_{\setminus k}\rightarrow\mathcal{I}_{\setminus k}$ with~$\omega^{f_k}=(\omega^{f_k}_1,\cdots,\omega^{f_k}_{k-1},\omega^{f_k}_{k+1},\cdots,\omega^{f_k}_n)$.
	Each component~$\ell\not=k$ is given by the composition of~$\omega$ with~$f_k$:
	\begin{align}
		\omega^{f_k}_\ell: \mathcal O_{\setminus k} &\rightarrow \mathcal I_\ell\\
		o_{\setminus k} &\mapsto \omega_\ell\left( o_{\setminus k}, f_k(\omega_k(o_{\setminus k})) \right)
		\,.
	\end{align}
\end{definition}

Since, according to Lemma~\ref{constant1partyeqA.1}, each component of a (pseudo) process function is constant over the same party's input, we can use the just stated definition and show that fixed points are preserved for reduced functions and {\em vice versa}:
\begin{lemma}
	\label{lemma:fppreservation}
	Let an~$(2\leq n)$-party function~$\omega:\mathcal{O}\rightarrow\mathcal{I}$ be such that for every party~$k$,~$\omega_k(o)=\omega_k(o_{\setminus k})$, then
	\begin{enumerate}
		\item if~$i\in\mathcal I$ is a fixed point of~$\omega\circ f$ for some~$f$, then, for every~$k$,~$i_{\setminus k}$ is a fixed point of $\omega^{f_k}\circ f_{\setminus k}$,
		\item if~$i_{\setminus k}\in\mathcal I_{\setminus k}$ is a fixed point of~$\omega^{f_k} \circ f_{\setminus k}$ for some~$f$ and for some~$k$, then,~$(i_k,i_{\setminus k})$ is a fixed point of~$\omega \circ f$ with~$i_k=\omega_k(f_{\setminus k}(i_{\setminus k}))$.
	\end{enumerate}
\end{lemma}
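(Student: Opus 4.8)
The plan is to prove both items by direct substitution, unfolding the definition of a fixed point of $\omega\circ f$ and of the reduced function $\omega^{f_k}$, and repeatedly invoking the hypothesis that each $\omega_k$ is constant over $\mathcal O_k$. Throughout I would write $o:=f(i)=(f_1(i_1),\dots,f_n(i_n))$ so that the fixed-point condition $i=\omega\circ f(i)$ reads componentwise as $i_\ell=\omega_\ell(o)$ for every $\ell$, and I note that the constancy hypothesis lets me drop the self-argument, i.e. $\omega_\ell(o)=\omega_\ell(o_{\setminus \ell})$. The whole argument then reduces to tracking which equation supplies the value that the reduced function feeds back.

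For the first item, I fix a party $k$ and start from the fixed point $i$. The crucial observation is that the $k$-th fixed-point equation $i_k=\omega_k(o)=\omega_k(o_{\setminus k})$ identifies $i_k$ with $\omega_k(o_{\setminus k})$, whence $o_k=f_k(i_k)=f_k(\omega_k(o_{\setminus k}))$. Substituting this into the definition of $\omega^{f_k}_\ell$ for each $\ell\neq k$ gives $\omega^{f_k}_\ell(o_{\setminus k})=\omega_\ell(o_{\setminus k},f_k(\omega_k(o_{\setminus k})))=\omega_\ell(o_{\setminus k},o_k)=\omega_\ell(o)=i_\ell$. Since $o_{\setminus k}=f_{\setminus k}(i_{\setminus k})$, this is exactly the statement that $i_{\setminus k}$ is a fixed point of $\omega^{f_k}\circ f_{\setminus k}$.

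For the second item, I would run the same chain of equalities in reverse. Setting $o_{\setminus k}:=f_{\setminus k}(i_{\setminus k})$ and $i_k:=\omega_k(o_{\setminus k})$, I must check the full fixed-point equation for $(i_k,i_{\setminus k})$ under $\omega\circ f$. The $k$-th component holds by definition together with constancy, namely $\omega_k(f(i_k,i_{\setminus k}))=\omega_k(o_{\setminus k})=i_k$. For $\ell\neq k$, I use $f_k(i_k)=f_k(\omega_k(o_{\setminus k}))$ to recognise the definition of the reduced function, $\omega_\ell(f(i_k,i_{\setminus k}))=\omega_\ell(o_{\setminus k},f_k(\omega_k(o_{\setminus k})))=\omega^{f_k}_\ell(o_{\setminus k})=i_\ell$, where the last equality is the assumed reduced fixed-point condition.

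The argument is purely computational, so there is no deep obstacle; the only thing to be careful about is the abuse of notation $\omega_k(o_{\setminus k})$ versus $\omega_k(o)$, which is legitimate precisely because of the constancy hypothesis. The real content of the proof is the recognition that the self-consistency equation for party $k$ is exactly what produces the value $f_k(\omega_k(o_{\setminus k}))$ that the reduced function reinserts, so that the two fixed-point conditions become literally the same system of equations once the $k$-th party has been swallowed.
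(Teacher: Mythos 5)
Your proof is correct and takes essentially the same route as the paper's: both items are verified by direct substitution, using the $k$-th fixed-point equation together with the constancy of $\omega_k$ over $\mathcal{O}_k$ to identify $f_k\left(\omega_k(o_{\setminus k})\right)$ with $o_k=f_k(i_k)$, so that the reduced and full fixed-point conditions coincide. Your explicit note that the $k$-th component in the second item requires the constancy hypothesis (not just the definition of $i_k$) is, if anything, slightly more careful than the paper's own wording.
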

\begin{proof}
	We start with the first statement.
	The idea is to express part~$k$ of the fixed point as a function of~$\omega_k$, and then we plug it into the expression of the reduced function.
	So, we have the identity~$i_k=\omega_k(f_{\setminus k}(i_{\setminus k}))$.
	Now, for every~$\ell\not=k$, the previous definition implies
	\begin{align}
		\omega^{f_k}_\ell \circ f_{\setminus k}(i_{\setminus k}) = \omega_\ell\left( 
		f_{\setminus k}(i_{\setminus k}),
		f_k(i_k)
		\right)
		=
		\omega_\ell \circ f(i)=i_\ell
		\,.
	\end{align}

	For the second statement, we extend the fixed point~$i_{\setminus k}$ with~$\omega_k(f_{\setminus k}(i_{\setminus k}))$ for the~$k$-th component, and then show that this extended fixed point~$i$ is a fixed point of~$\omega \circ f$.
	For the~$k$-th component, this follows from the definition on how we have defined~$i_k$:~$i_k = \omega_k \circ f(i)$.
	For every component~$\ell\not=k$, we have, from the premise, that
		$i_\ell = \omega^{f_k}_\ell \circ f_{\setminus k}(i_{\setminus k})$.
	This, by definition of the reduced function, is equal to
		$\omega_\ell \circ \left( 
		f_{\setminus k}(i_{\setminus k}),
		f_k(\omega_k(f_{\setminus k}(i_{\setminus k})))
		\right)$.
	Now, by definition of~$i_k$, this expression is equal to
		$\omega_\ell \circ \left( 
		f_{\setminus k}(i_{\setminus k}),
		f_k(i_k)
		\right)
		=\omega_\ell \circ  f(i)$.
\end{proof}

We can relax that lemma to arrive at the following convenient form:
\begin{corollary}
	\label{corollary:noormanyfp}
	Let an~$(2\leq n)$-party function~$\omega:\mathcal{O}\rightarrow\mathcal{I}$ be such that for every party~$k$,~$\omega_k(o)=\omega_k(o_{\setminus k})$, then
	\begin{enumerate}
		\item if for some~$f$ the function~$\omega \circ f$ has {\em two or more\/} fixed points,
			then there exists a party~$k$ such that the function~$\omega^{f_k}\circ f_{\setminus k}$ has {\em two or more\/} fixed points.
		\item if for some~$f$ the function~$\omega \circ f$ has {\em no\/} fixed point,
			then for all parties~$k$ the function~$\omega^{f_k}\circ f_{\setminus k}$ has {\em no\/} fixed point.
	\end{enumerate}
\end{corollary}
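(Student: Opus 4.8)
The plan is to deduce both statements directly from Lemma~\ref{lemma:fppreservation}, which already sets up a correspondence between the fixed points of $\omega \circ f$ and those of the reduced function $\omega^{f_k} \circ f_{\setminus k}$. The corollary is then just a matter of tracking how many fixed points survive under this correspondence, so no new computation is needed—only a careful choice of the party to swallow.

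For the second statement I would argue by contraposition from part~2 of Lemma~\ref{lemma:fppreservation}. Suppose $\omega \circ f$ has no fixed point, yet for some party~$k$ the reduced function $\omega^{f_k} \circ f_{\setminus k}$ admits a fixed point $i_{\setminus k}$. Then part~2 of the lemma produces a fixed point $(i_k, i_{\setminus k})$ of $\omega \circ f$ with $i_k = \omega_k(f_{\setminus k}(i_{\setminus k}))$, contradicting the assumption. Hence no such $k$ exists, and every reduced function is fixed-point free.

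For the first statement, part~1 of Lemma~\ref{lemma:fppreservation} sends every fixed point $i$ of $\omega \circ f$ to a fixed point $i_{\setminus k}$ of $\omega^{f_k} \circ f_{\setminus k}$, for each $k$. The delicate point—and the step I expect to be the main obstacle—is that this projection could collapse two distinct fixed points $i \neq i'$ of $\omega \circ f$ into a single point $i_{\setminus k} = i'_{\setminus k}$, precisely when $i$ and $i'$ differ only in their $k$-th coordinate. The resolution is to choose $k$ with care: since $i \neq i'$, they differ in some coordinate $m$, that is $i_m \neq i'_m$; because $n \geq 2$ there is at least one party $k \neq m$, and for this choice the distinguishing coordinate $m$ is retained in both $i_{\setminus k}$ and $i'_{\setminus k}$, so these remain distinct. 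Applying part~1 of the lemma to both $i$ and $i'$ with this $k$ then yields two distinct fixed points of $\omega^{f_k} \circ f_{\setminus k}$, which is exactly what the first statement asserts.
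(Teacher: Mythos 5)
Your proposal is correct and matches the paper's own proof essentially verbatim: the paper also derives the second statement as the contrapositive of part~2 of Lemma~\ref{lemma:fppreservation}, and for the first statement it likewise picks a coordinate~$\ell$ where the two fixed points differ and swallows a party~$k\not=\ell$ (possible since~$n\geq 2$) so that the projected fixed points remain distinct. Your write-up is, if anything, slightly more explicit about why the projection cannot collapse the two fixed points.
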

\begin{proof}
	For the first statement, let~$i,i'$ be two fixed points of~$\omega\circ f$ that differ at position~$\ell$.
	The proof is concluded by using the first part of the previous lemma and by choosing~$k\not=\ell$.

	The second statement follows from the contrapositive of the second part of the previous lemma.
\end{proof}

Having the previous lemma at hand, we prove that the property of a function in being a (pseudo) process function transfers to less parties.
\begin{theorem}
	\label{thm:transitivity}
	If~$\omega$ is an~$(2\leq n)$-party process function, then~$\forall k,f_k:\omega^{f_k}$ is an~$n-1$-party process function.
	The same holds for pseudo process functions.
\end{theorem}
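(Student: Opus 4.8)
The plan is to read off the conclusion from Lemma~\ref{lemma:fppreservation}, which, for a fixed full intervention $f=(f_k,f_{\setminus k})$, relates the fixed points of the full function $\omega\circ f$ to those of the reduced function $\omega^{f_k}\circ f_{\setminus k}$. Before invoking it I would check that its hypothesis holds: Lemma~\ref{lemma:fppreservation} requires $\omega_k(o)=\omega_k(o_{\setminus k})$ for every party $k$, and this is guaranteed for any (pseudo) process function by Lemma~\ref{constant1partyeqA.1}. Hence $\omega^{f_k}$ is well defined and the lemma is applicable. Since $k$ and $f_k$ are arbitrary throughout, it suffices to treat a single fixed choice of $k$ and $f_k$.

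For the process-function case I would argue that ``at least one fixed point'' descends from $\omega$ to $\omega^{f_k}$. Fix $k$ and $f_k$ and let $f_{\setminus k}$ be any intervention for the remaining parties; set $f=(f_k,f_{\setminus k})$. Because $\omega$ is a process function, $\omega\circ f$ has a fixed point $i$, and the first part of Lemma~\ref{lemma:fppreservation} then shows that $i_{\setminus k}$ is a fixed point of $\omega^{f_k}\circ f_{\setminus k}$. As $f_{\setminus k}$ ranges over all interventions, this says precisely that $\omega^{f_k}$ never suffers from the grandfather antinomy, {\it i.e.}, it is an $(n-1)$-party process function.

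For the pseudo-process-function case I would argue dually that ``at most one fixed point'' lifts from $\omega^{f_k}$ back to $\omega$. Fix $k$ and $f_k$, take any $f_{\setminus k}$, and suppose towards a contradiction that $\omega^{f_k}\circ f_{\setminus k}$ has two distinct fixed points $i_{\setminus k}\neq i'_{\setminus k}$. By the second part of Lemma~\ref{lemma:fppreservation}, each of them extends to a fixed point of $\omega\circ f$, namely $(i_k,i_{\setminus k})$ and $(i'_k,i'_{\setminus k})$ with $i_k=\omega_k(f_{\setminus k}(i_{\setminus k}))$ and analogously for $i'_k$. These two fixed points of $\omega\circ f$ already disagree on their $\setminus k$ coordinates, hence are distinct, so $\omega\circ f$ has at least two fixed points, contradicting that $\omega$ is a pseudo process function.

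The two halves are genuinely symmetric, using the two directions of Lemma~\ref{lemma:fppreservation} respectively, so I do not expect a serious obstacle beyond bookkeeping. The only point requiring a moment of care is the distinctness of the lifted fixed points in the pseudo case; this is immediate because they differ in the $\setminus k$ components, so there is no need to reason about the reconstructed $k$-th coordinate $i_k$. I would also note in passing that, since the two constructions in Lemma~\ref{lemma:fppreservation} are mutually inverse, one in fact obtains a bijection between the fixed points of $\omega\circ f$ and those of $\omega^{f_k}\circ f_{\setminus k}$, from which both statements follow at once; the case split above is merely the more transparent way to present it.
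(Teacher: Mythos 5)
Your proposal is correct and follows essentially the same route as the paper: the process-function case descends via the first part of Lemma~\ref{lemma:fppreservation}, and the pseudo case is handled by the same contradiction through its second part, with the hypothesis of the lemma supplied by Lemma~\ref{constant1partyeqA.1}. Your explicit remark that the two lifted fixed points are distinct because they already differ on the $\setminus k$ components makes precise a step the paper leaves implicit, but this is presentation, not a difference in approach.
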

\begin{proof}
	Let~$\omega$ be a process function.
	This means that~$\forall f:\omega\circ f$ has {\em at least one\/} fixed point.
	By using the first part of Lemma~\ref{lemma:fppreservation}, we get that for all~$k,f:\omega^{f_k}\circ f_{\setminus k}$ has at least one fixed point as well.

	Let~$\omega$ be a pseudo process function.
	In this case, we have that~$\forall f:\omega\circ f$ has {\em at most one\/} fixed point.
	Assume towards a contradiction that there exists some~$k$ and some~$f$, such that~$\omega^{f_k}\circ f_{\setminus k}$ has two or more fixed points.
	Then, by the second part of Lemma~\ref{lemma:fppreservation},~$\omega\circ f$ has two or more fixed points as well, which, by definition, cannot be the case.
\end{proof}

\subsection{Existence of unique fixed point}
We can now use the results from the previous section in order to prove Theorem~\ref{theorem:existenceoffixedpoint}.
\begin{proof}
	We prove this theorem for (pseudo) process functions via the~$n$-dependent propositions
	\begin{align}
		P[n]: &\quad\omega\text{ is an~$n$-party process function}
		\Longrightarrow
		\forall f: \omega\circ f \text{ has a unique fixed point,}\notag\\
		Q[n]: &\quad\omega\text{ is an~$n$-party pseudo process function}
		\Longrightarrow
		\forall f: \omega\circ f \text{ has a unique fixed point,}\notag
	\end{align}
	and by induction over the number of parties, {\it i.e.}, we prove~$P[n]\Longrightarrow P[n+1]$ as well as~$Q[n]\Longrightarrow Q[n+1]$.

	\textit{Process functions.}
	Towards a contradiction assume that~$P[n+1]$ is false.
	The negation of~$P[n+1]$ is
	\begin{align}
		\omega\text{ is an~$(n+1)$-party process function }\wedge
		\,\exists f:\omega \circ f\text{ has two or more fixed points.}\notag
	\end{align}
	Let~$f$ be such that~$\omega\circ f$ has two or more fixed points.
	Now, by Theorem~\ref{thm:transitivity}, for any choice of~$k$,~$\omega^{f_k}$ is an~$n$-party process function.
	Furthermore, by the first part of Corollary~$\ref{corollary:noormanyfp}$, there exists some~$k$ such that~$\omega^{f_k}\circ f_{\setminus k}$ has two or more fixed points as well:~$P[n]$ is false.
	
	\textit{Pseudo process function.}
	Again, assume that~$Q[n+1]$ is false, and let~$f$ be such that~$\omega\circ f$ has no fixed point.
	By Theorem~\ref{thm:transitivity}, we also have that for any choice of~$k$,~$\omega^{f_k}$ is an~$n$-party pseudo process function.
	Now, we use the second part of Corollary~$\ref{corollary:noormanyfp}$ and we see that for any~$k$ the function~$\omega^{f_k} \circ f_{\setminus k}$ has no fixed point:~$Q[n]$ is false as well.

	Both cases, however, stand in contrast to Corollary~\ref{generalonepartyconstant} (the base case): Single-party (pseudo) process functions have a unique fixed point.
\end{proof}

By this theorem we also observe that the set of process function (which equals the set of pseudo process functions) is the set of classical and deterministic process matrices~\cite{Oreshkov2012,Baumeler2016fixed}.
Note, that by Ref.~\cite{Baumeler2016fixed}, these (pseudo) process functions can always be embedded into reversible functions as well.
Finally, examples of cyclic causal structures with interventions are known that produce dynamics incompatible with any acyclic causal structure~\cite{Baumeler2016,Baumeler2016fixed,barrett2020cyclic}.

\section{Equivalence of grandfather and information antinomy}
We prove our main statement, namely that the grandfather and the information antinomy are equivalent under intervention (see Corollary~\ref{corollary:equiv}).
\begin{proof}
	Let our causal structure be such that no grandfather antinomy arises, and let~$\omega$ be the induced function.
	This means, by definition, that if we perform any intervention~$f$, then~$\omega \circ f$ always has {\em at least one\/} fixed point.
	By Theorem~\ref{theorem:existenceoffixedpoint}, for every~$f$, the fixed point is {\em unique}.
	This again implies the absence of the information antinomy for~$\omega$.
	The same holds in the other direction.
	Suppose the causal structure to be such that its induced function~$\omega$ does not suffer from the information antinomy.
	By definition again, this means that for every choice of intervention~$f$, the function~$\omega \circ f$ has {\em at most one\/} fixed point.
	Then, by the same theorem, it follows that~$\omega \circ f$ always has {\em one\/} fixed point: The grandfather antinomy never arises.
\end{proof}

\section{Conclusion and outlook}
Following an intervention-based approach to causality, we have shown that the grandfather antinomy is equivalent to the information antinomy---in the classical case.
These antinomies might only arise in cyclic causal structures.
Cyclic causal structures have recently become a topic of research again (see, {\it e.g.}, Refs.~\cite{Oreshkov2012,barrett2020cyclic} and related work).
Cyclic causal structures, however, have a longer history: Since it is known that general relativity allows for time travel, people have been concerned about their consequences and looked for arguments against such a behaviour.
The strongest argument is the grandfather antinomy, {\it i.e.}, dynamics that lead to a logical contradiction.
Another argument, yet often neglected, is the information antinomy, where too many consistent solutions to the dynamics exist.
Here, we have shown that both problems are the same: If it were possible to generate a logical contradiction, then it would also be possible to generate a multitude of consistent solutions, and {\em vice versa}.
By showing this equivalence, we are confronted with taking the information antinomy as seriously as the grandfather antinomy.
This also motivates to exclude the information antinomy from other models~\cite{Baumeler2018}.
However, both antinomies cannot be used as arguments to rule out cyclic causal structures~\cite{phd}.

Note that the present result does not depend on the process-matrix framework~\cite{Oreshkov2012}.
On the contrary, while the process-matrix framework surely motivates this research, our result can be understood as a derivation of the classical and deterministic limit~\cite{ClassicalNC1,Baumeler2016fixed} of the process-matrix framework.
These classical deterministic processes are precisely the (pseudo) process functions described here.

Open questions are to what extent we can maintain this equivalence for probabilistic as well as for quantum causal models.
As for probabilistic models, we could augment every vertex~$v$ in the graph by a vertex~$v'$ with zero in degree and an edge from~$v'$ to~$v$, such that the ``noise'' is transferred from~$v'$ to~$v$~\cite{barrett2019quantum,barrett2020cyclic}.
If ``fine-tuning'' were forbidden, {\it i.e.}, if the properties of an induced function being a (pseudo) process function does not change under the probabilities injected at the augmented vertexes, then we suspect the same result to hold as well.\footnote{Note that the classical and probabilistic limit of the process-matrix framework~\cite{Oreshkov2012} describes stochastic processes that cannot be expressed as a convex mixture of process functions~\cite{Baumeler2016}. These stochastic processes, however, have a severe limitation: They {\em cannot\/} be derandomized.}
The quantum case is more problematic.
The reason for this is that it is not obvious on how to define both antinomies: What does it mean that the grandfather antinomy arises in a cyclic quantum causal structure?
The approach---as we did here---via fixed points does not seem to go through: Entanglement poses a problem.
Imagine that for two distinct interventions distinct quantum fixed points exist.
What happens if the parties intervene with a ``superposition'' of both interventions?
It looks like the fixed points would get entangled with the interventions, and hence, the fixed point for that ``superposition'' of interventions cannot be described separately.
Moreover, another problem with entanglement can be illustrated in a time-traveling context: If a quantum state time travels to the past, then it might still be entangled to a system in the future---temporal entanglement enters the picture and contrasts with monogamy of entanglement~\cite{Marletto2019}.

\section*{Acknowledgements}
We thank anonymous reviewers for their helpful comments.
We acknowledge support by the Austrian Science Fund~(FWF): ZK3.
\"A.B.~is also supported by the Erwin Schr\"odinger Center for Quantum Science~\& Technology~(ESQ), and the Austrian Science Fund (FWF):~F7103.

\bibliographystyle{eptcs}
\bibliography{refs}
\end{document}